 \newtheorem{definition}{Definition}
 \newtheorem{proposition}[definition]{Proposition}
 \newtheorem{example}[definition]{Example}
 \newtheorem{corollary}[definition]{Corollary}
\newtheorem{lemma}[definition]{Lemma}
\begin{document}

\title{A formalization of re-identification in terms of compatible probabilities}
\author{
  Vicen\c c Torra$^{1}$, Klara Stokes$^{2}$\\
\small $^1$ IIIA, Institut d'Investigaci\'o en Intel$\cdot$lig\`encia Artificial\\ 
\small  CSIC, Consejo Superior de Investigaciones Cient\'{\i}ficas\\ 
\small  Campus UAB, 08193 Bellaterra, Catalonia, Spain\\
\small  Email: vtorra@iiia.csic.es\\ 
\small $^2$ Universitat Oberta de Catalunya\\
\small  Email: kstokes@uoc.edu\\ 
}

\maketitle

\begin{abstract}
Re-identification algorithms are used in data privacy to measure disclosure risk. They model the situation in which an adversary attacks a published database by means of linking the information of this adversary with the database. 

In this paper we formalize this type of algorithm in terms of true probabilities and compatible belief functions. The purpose of this work is to leave aside as re-identification algorithms those algorithms that do not satisfy a minimum requirement. 
\end{abstract}

\section{Introduction}
Privacy preserving data mining (PPDM) and statistical disclosure control (SDC) are two active areas of research that study how to avoid disclosure of sensitive information when data is released to third parties for their analysis. 

One of the existing approaches for ensuring privacy consists of manipulating the datafile adding some noise or reducing the quality of the information. Several data protection methods have been developed in this direction. Noise addition, microaggregation, rank swapping and PRAM are some of the existing methods. In general, these methods consists of transforming a data file $X$ by means of a masking method $\rho$ into a new data file $Y$. That is, the masking method returns $Y:=\rho(X)$. 

Methods reduce the disclosure risk at the expenses of some information loss. In other words, the results from an analysis of $X$ will in general give different results than an analysis of $Y$. With the aim of quantifying this loss, several information loss measures have been defined in the literature. 

Nevertheless, although the function $\rho$ inflicts a perturbation in the data that causes some information to be lost, the modification of the data may be insufficient for ensuring privacy. Due to this, disclosure risk measures have been defined and studied in the literature. 

Disclosure risk measures can be defined in terms of re-identification. This corresponds to identity disclosure. Re-identification algorithms permit us to model the situation in which an adversary wants to attack a published data set using some information that he has available. The adversary tries to link his information expressed as records in a datafile with the records in the published data set. The more records he reidentifies, the larger the risk. Therefore, given a particular file, the proportion of reidentified records is a measure of the risk. 

The concept of re-identification is also the cornerstone of the theory of $k$-anonymity. A dataset is $k$-anonymous if for each record in the dataset, there are other $k-1$ records that are equal to it. Nevertheless, as pointed out in~\cite{{ref:Stokes.Torra.2012:PAIS}}, the important question here is not whether the records have the same or different values, but that the records are indistinguishable in the re-identification process (when the adversary attacks the dataset). This idea permitted us in~\cite{ref:Stokes.Torra.2012:SOCO} to define $n$-confusion as an alternative to $k$-anonymity which provides the same level of anonymity without requiring records to have the same values. 

Because of that, re-identification algorithms are fundamental in data privacy and the literature presents several algorithms for re-identification~\cite{{ref:Torra.Abowd.Domingo.2006:PSD},{ref:Winkler.2004:PSD},{ref:Yancey.Winkler.Creecy.2002}}. The literature also discusses some models~\cite{{ref:Fellegi.Sunter.1969},{ref:Abril.Navarro-Arribas.Torra.2012:IF}} for re-identification that are used to determine the parameters of the algorithms. Nevertheless, up to our knowledge, there is no approach for how to formalize and determine correctness of re-identification algorithms. That is, there is no discussion on what a proper and correct re-identification algorithm is, and what kind of result a correct re-identification algorithm should give. 


In this paper we present a formalization of re-identification in terms of belief functions and true probabilities. 

The basic idea is that a good re-identification algorithm, given some information, a probability distribution over a population. If we assume that this re-identification algorithm behaves correctly, then it cannot return any probability distribution but must return a distribution that is compatible with the true one. In addition, we would expect that the more information we have available, the more the probability of the algorithm should resemble the true one. 

In this paper we model this situation in terms of belief functions~\cite{{ref:Shafer.1976}} and the transferable belief model~\cite{ref:Smets.Kennes.1994}. Departing from a true probability, we define two types of re-identification algorithms. First, we define a re-identification algorithm as one that returns a belief function that is compatible~\cite{ref:Chateauneuf.1994} with the true probability, and later as one that returns the pignistic transformation of a belief function that is compatible with the true probability. 

The structure of the paper is as follows. In Section~\ref{sec:2} we review some concepts that are needed later in this work. In particular, we discuss belief functions and re-identification algorithms. In Section~\ref{sec:4}, we introduce our model and discuss some relevant results about it. The paper finishes with some conclusions. 

\section{Preliminaries}
\label{sec:2}

This section is divided into three parts. In the first part we review some concepts related to belief functions, a model for approximate reasoning. We will use belief functions to construct a model for record linkage. In the second part we review k-anonymity, one of the approaches in data masking. 

\subsection{Belief functions}
\label{seccio.belief.functions}
Belief functions can be used to represent uncertainty with respect to probability distributions. We will not go into the details of their justification. The description in this section focuses on the concepts we need in the rest of the paper. For details and additional discussion see e.g.~\cite{{ref:Chateauneuf.1994},{ref:Smets.Kennes.1994},{ref:Walley.1991}}.

\begin{definition} 
\label{def:chMeas:Bel}
A set function $Bel: 2^{X} \rightarrow [0,1]$ is a {\em belief function} if and only if it satisfies 
\begin{enumerate}
\item[(i)] $\mu(\emptyset) = 0$, $\mu(X) = 1$ (boundary conditions)
\item[(ii)] $A \subseteq B$ implies $\mu(A) \leq \mu(B)$ (monotonicity)
\item[(iii)] For all $A_1, \dots, A_n \subseteq X$, 
\begin{eqnarray}
\label{eq:chMeas:Bel:ineq}
Bel(A_1 \cup ... \cup A_n) & \geq & \sum_j Bel(A_j) - \sum_{j < k} Bel(A_j \cap A_k) + ... + \nonumber \\
                           &      & (-1)^{n+1} Bel(A_1 \cap ... \cap A_n).
\end{eqnarray}
\end{enumerate}
\end{definition}

Belief functions are closely related to basic probability assignments. There is a basic probability assignment for each belief function, and a belief function for each basic probability assignment. 

\begin{definition}
A function $m:2^{X} \rightarrow [0,1]$ is a {\em basic probability assignment\index{basic probability assignment}} if and only if 
\begin{description}
\item[(i)] $m(\emptyset) = 0$
\item[(ii)] $\sum_{A \subseteq X} m(A) = 1$
\end{description}
\end{definition}

There exist two names for this function in the literature: basic probability assignment (e.g. in~\cite{ref:Shafer.1976}) and basic belief assignment (e.g. in~\cite{ref:Smets.Kennes.1994}). In the rest of the paper we will say just assignment. 

The following proposition establishes the relationship between assignments and belief functions. 

\begin{proposition}
Let $Bel$ be a belief function defined on the reference set $X$, then the function $m_{\mu}$ defined below is a basic probability assignment $m$: 
\begin{equation}
\label{eq:chMeas:mFromBel}
m_{\mu}(A) = \sum_{B \subseteq A} (-1)^{|A|-|B|} Bel(B)  \qquad for~all~ A \subseteq X.
\end{equation}
Let $m$ be a basic probability assignment, then, the function $Bel_{m}$ defined below is a belief function
\begin{equation}
\label{eq:chMeas:belFromM}
Bel_{m}(A) := \sum_{B \subseteq A} m(B) \qquad for~all~ A \subseteq X.
\end{equation}
\end{proposition}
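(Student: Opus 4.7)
The plan is to verify each direction of the correspondence by checking the defining axioms, handling the easier direction (assignment $\Rightarrow$ belief function) first and leaving the non-negativity of $m_\mu$ for last.

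For the second claim, that $Bel_m$ is a belief function when $m$ is a basic probability assignment, the boundary conditions follow immediately: $Bel_m(\emptyset)=m(\emptyset)=0$ and $Bel_m(X)=\sum_{B\subseteq X}m(B)=1$. Monotonicity is clear from $m\geq 0$. For the inequality~(\ref{eq:chMeas:Bel:ineq}) I would expand both sides as linear combinations of the values $m(B)$ and compare coefficients term by term. Fix $B$ and let $k_B=|\{j:B\subseteq A_j\}|$; then on the right-hand side $m(B)$ appears with coefficient $\sum_{i=1}^{k_B}(-1)^{i+1}\binom{k_B}{i}$, which by the binomial identity equals $1$ if $k_B\geq 1$ and $0$ if $k_B=0$. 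On the left-hand side $m(B)$ appears with coefficient $1$ exactly when $B\subseteq A_1\cup\cdots\cup A_n$. Since every $B$ contained in some $A_j$ is also contained in the union, and all masses are non-negative, the left side dominates the right side.

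For the first claim, that $m_\mu$ defined by Möbius inversion is a basic probability assignment, the condition $m_\mu(\emptyset)=Bel(\emptyset)=0$ is immediate. For $\sum_{A\subseteq X}m_\mu(A)=1$, I would substitute the definition of $m_\mu$, swap the order of summation, and use the standard identity $\sum_{B\subseteq A\subseteq X}(-1)^{|A|-|B|}=0$ unless $B=X$; what survives is $Bel(X)=1$.

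The main obstacle, and the step I expect to require the most care, is showing $m_\mu(A)\geq 0$; this is where axiom (iii) of the belief function actually bites. The key trick is to apply the superadditivity inequality in Definition~\ref{def:chMeas:Bel} to the family $A_i:=A\setminus\{a_i\}$ for $i=1,\ldots,n$, where $a_1,\ldots,a_n$ are the elements of $A$ (the cases $|A|\leq 1$ being trivial by boundary and monotonicity). One has $A_1\cup\cdots\cup A_n=A$ and, for any index set $I\subseteq\{1,\ldots,n\}$, $\bigcap_{i\in I}A_i=A\setminus\{a_i:i\in I\}$, so as $I$ ranges over all non-empty subsets the intersections enumerate precisely the proper subsets of $A$. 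Re-indexing by $B=A\setminus\{a_i:i\in I\}$ so that $|I|=|A|-|B|$, the inequality becomes
\[
Bel(A)\geq \sum_{B\subsetneq A}(-1)^{|A|-|B|+1}Bel(B),
\]
which rearranges directly into $\sum_{B\subseteq A}(-1)^{|A|-|B|}Bel(B)\geq 0$, i.e.\ $m_\mu(A)\geq 0$. The cleanest presentation will be to treat the re-indexing explicitly, since the alternating signs are the only place a sign error is likely to creep in.
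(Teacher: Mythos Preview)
Your proof is correct. The paper itself does not prove this proposition; it is stated as a standard background fact about belief functions (the Möbius correspondence, due essentially to Shafer~\cite{ref:Shafer.1976}) and is simply quoted without argument. So there is nothing to compare against.

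For the record, each of your steps is sound. The inclusion--exclusion coefficient computation for axiom~(iii) is the right way to see that $Bel_m$ is totally monotone, and your choice $A_i=A\setminus\{a_i\}$ is the classical device for extracting $m_\mu(A)\geq 0$ from the superadditivity inequality: the bijection between nonempty index sets $I$ and proper subsets $B=A\setminus\{a_i:i\in I\}$ with $|I|=|A|-|B|$ is exactly what makes the signs line up. The only cosmetic remark is that you might state explicitly that for $n\geq 2$ one indeed has $\bigcup_i A_i=A$ (this fails for $n=1$, which is why you split off $|A|\leq 1$), but you already flagged that case as handled separately.
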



Belief functions generalize probabilities. In particular, when $m(A)=0$ for all $A$ such that $|A|>1$, then $Bel$ is a probability. In this case, the assignment $m$ to the singletons is the probability distribution. That is, $P(\{x\})=m(\{x\})$ for all $x\in X$, and then $P(A)=Bel(A)$ for all $A \subseteq X$. 

As stated before, belief functions can represent uncertainty in probability distributions. They permit to differentiate situations which standard probabilities cannot. For example, total ignorance in a set $X$ is modeled defining $m(X)=1$ and $m(A)=0$ for all $A \neq X$. In contrast, when we know that the elements in $X$ all have the same support we assign $m(x)=1/|X|$ for all $x \in X$. Note that this is different from the case of standard probabilities where both situations are represented by $P(x)=1/|X|$ for $x \in X$. 

Given a belief function $Bel$ defined from $m$, Dempster defined the pignistic transformation as a function that finds a probability distribution from $Bel$. This pignistic transformation is based on the transferable belief model by Smets~\cite{ref:Smets.Kennes.1994} that distinguishes between the credal and the pignistic level. The credal level is where beliefs are taken into consideration and operated on, and the pignistic level is where beliefs are used. Although we do not understand probabilities and beliefs as subjective, as Smets does, both levels are appropriate for modeling re-identification. An ideal re-identification algorithm will compute belief functions in the credal level, with the minimal possible commitments in case of uncertainty. Then, when decisions are to be made, we move to the pignistic level and probabilities are made concrete. 

\begin{definition}
\label{def:pignisticTransformation}
Let $Bel$ be a belief function, then we define the pignistic probability distribution derived from $Bel$, $P_{Bel}$, as: 
$$P_{Bel}(\{x\}) = \sum_{\{B \in 2^X: x \in B\}} \frac{m(B)}{|B|}$$
\end{definition}

\subsection{Data protection methods}
Formally, given a data set $X$, a masking method $\rho$ constructs a data set $Y:=\rho(X)$. Data privacy studies masking methods that return datasets which can be released to third parties in a way that avoids disclosure of sensitive information, but preserves the value of the data as material for analysis. 

One of the existing concepts for data privacy is $k$-anonymity. A dataset satisfies $k$-anonymity when for each record there are $k-1$ other records that are indistinguishable in the dataset. 

Several algorithms have been proposed in the literature to build a dataset compliant with $k$-anonymity through generalization, suppression and clustering. For example, if we have 6 records with values 18,16,19,22,24,24 for attribute $V_1$, we can consider the intervals [15,19],[20,25] and then recode the original values according to these intervals. Doing so, we will have three in the interval [15,19] and three other in the interval [20,25]. This ensures $k$-anonymity for $k=3$ if only this sole attribute is discussed. 

We will use $gen_{V_i}$ to denote a method that ensures $k$-anonymity for a single attribute $V_i$ using generalization for some appropriate value $k$. 

\section{Re-identification algorithms}

Given a dataset $X$, a protection method $\rho$, and the protected dataset $Y:=\rho(X)$, disclosure risk can be measured in terms of the number of records in $Y$ that can be correctly reidentified. Indeed, a common approach when constructing re-identification algorithms is to optimize with respect to this criterion~\cite{ref:Abril.Navarro-Arribas.Torra.2012:IF}. Nevertheless, although formalizations of the expected outcome of these algorithms exist (i.e., we expect a method to maximize the number of correct links), no formalization exists on what we mean when we say that a re-identification method is correct. We would like a formalization that excludes re-identification methods which perform incorrect re-identifications. In this paper, we discuss a formalization based on belief functions and a true probability. We will base our discussion on a previous definition of re-identification algorithms used to define $n$-confusion in ~\cite{ref:Stokes.Torra.2012:PAIS}. The definition is as follows. 

\begin{definition}
\label{def:reidentification}
\cite{ref:Stokes.Torra.2012:PAIS}
Let $\rho$ be a method for anonymization of databases, $X$ a table with $n$ records indexed by $I$ in the space of tables $D$ and $Y:=\rho(X)$ the anonymization of $X$ using $\rho$. 
Then a re-identification method is a function that, 
given a collection of entries $y$ in $\mathcal{P}(Y)$ and some additional information from a space of auxiliary informations $A$, 
returns the probability that $y$ are entries from the record with index $i\in I$,
$$\begin{array}{rccc}r:&\mathcal{P}(Y) \times A &\rightarrow &[0,1]^{n}\\\\
&(y,a) &\mapsto &\left(P(y \textrm{ corresponds to record } X[i]): i\in I\right).
\end{array}$$
Consider the objective probability distribution corresponding to the re-identification problem. Then, we require from a re-identification method that it returns a probability distribution that is compatible with this probability, also when missing some relevant information. Compatibility can be modeled in terms of compatibility of belief functions (see \cite{{ref:Chateauneuf.1994},{ref:Smets.Kennes.1994}}). 
\end{definition}

Section~\ref{sec:4} discusses re-identification algorithms and the compatibility issue mentioned above. Before, we review some of the approaches that can be found in the literature on record linkage. Recall that we have defined record linkage in terms of the probability that $y$, the protected record, are entries from the record with index $i \in I$, and we denote this by $r(y,a)[i]$. 

As we will see, for some methods we can understand probabilities as following a Bayesian objective approach, and for other methods as subjective probabilities. In the latter case, we can understand the probabilities as votes. 

Some of the methods described below are not completely formalized in the literature, so the formulation is ours.

\begin{itemize}
\item {\bf $k$-anonymity and re-identification.} Re-identification methods applied to $k$-anonymous databases return for each $y$ a list of possible records ${\cal J} \subset I$ that are possible matches of the given record. 

On the basis of the principle of insufficient reason (or principle of indifference) this can be modeled by means of a uniform distribution over the records in ${\cal J}$. That is, $r(y,a)[i] = 1/|{\cal J}|$ for all $i \in {\cal J}$. 

An alternative model, not previously considered in the literature up to our knowledge, is to consider belief functions. This will be the subject of this article, and Example~\ref{ex:kAnon.probs.belief} focuses on the use of belief functions for their use in $k$-anonymity. 

\item {\bf Probabilistic record linkage.} The mathematical model formalized by Fellegi and Sunter in 1969~\cite{ref:Fellegi.Sunter.1969} is based on a probabilistic model that computes the probability of a particular coincidence pattern $\gamma$ conditioned by the existence of a match: $P(\gamma|Match)$. Probabilistic record linkage returns the probability of a correct match given a particular coincidence pattern (i.e, $P(Match|\gamma)$). The Bayes' rule is used in this process. This situation can be modeled by 
$$r(y,a)[i] = P(Match|\gamma(y,x_i)).$$ 

\item {\bf Specific attacks to data protection methods.} The approach to attack rank swapping in~\cite{ref:Nin.Herranz.Torra.2008:DKE.rankSwapping} can be represented in terms of a list of candidates, in the line of the re-identification methods for $k$-anonymity as described above. Re-identification attacks for rank swapping $p$-buckets can be modeled by means of probabilities. 

\item {\bf Distance-based record linkage.} Some literature exists where re-identification methods assign to each record in one file the most similar record (at a minimum distance) in the other file. In this case, probabilities can be defined from the distances, but such assignments should typically be only interpreted as voting or indications for subjective probabilities. \cite{ref:Copas.Hilton.1990} is an exception to this, where a real probability is estimated taking into account the similarity between records. 
\end{itemize}

\section{A formalization of re-identification}
\label{sec:4}

In this section we analyse the concept of re-identification further. In Definition~\ref{def:reidentification}, re-identification is a function that, given some partial information on $Y$ and some additional information, returns a probability distribution on the set of records. 

We claim that this probability distribution should be compatible with the true probability. Our motivation is to create a theoretical foundation for disclosure risk evaluation. The formalization will leave aside those re-identification algorithms that do not satisfy some minimum requirements. In particular, we will not approve algorithms that deliver incorrect results on purpose, and we will force algorithms to perform as well as possible, according to the evidence found in the data and any a priori knowledge. For the purpose of risk evaluation, using the worst case scenario, this implies no loss of generality. 

In this section, we introduce a formalization that relies on a true probability of re-identification. This true probability corresponds to the case in which we know {\em everything} about the whole anonymization process, and assuming this information is used in the re-identification process. That is, the true probability only includes the uncertainty that cannot be removed because e.g. randomness. 

We would expect that the re-identification process leads to a probability that is less informative than the true probability in case of uncertainty, e.g. on the masking process or on the data available for re-identification. Examples of such uncertainty could be that some variables are not included in the risk analysis, or that part of the masking process is not disclosed and cannot be taken into account in the risk analysis. 

Nevertheless, uncertainty does not justify all probability distributions. 
Only some of them are valid. As an extreme example, we cannot accept as a re-identification method one that assigns $r(y,A)[i]=1$ if and only if $i = i_0$ for any $y \in Y$. In order to represent less informative probabilities we use imprecise probabilities and, more specifically, belief functions. As stated in Section~\ref{seccio.belief.functions}, belief functions can be used when there is uncertainty in the values of a probability distribution. When no additional uncertainty is present in the re-identification process, the corresponding belief function is equivalent to a probability distribution. 

We will pressume that an ideal re-identification method is the one that expresses uncertainty by means of a belief function. The belief function computed by this re-identification method should be {\em compatible} with the true probability. 

Here we use the term {\em compatible} according to Chateauneuf~\cite{ref:Chateauneuf.1994}, who defined it for belief functions. Definition 1 in~\cite{ref:Chateauneuf.1994} defines two belief functions as compatible when the joint information is non-empty. The definition which we will use here is the same as Chateauneuf's definition except for the fact that we will compare a probabilty (the true one) and a belief function. 

We will use $P$ to denote the true probability of re-identification. We give its formal definition below. 

\begin{definition}
Let $X$ be a dataset, $\rho$ a data masking method, and $Y:=\rho(X)$. Then, we define the true probability $P_{\rho,X,Y}(x_i | y_i)$ as the probability that the protected record $y_i$ proceeds from the record $x_i$ given $\rho$, $X$, and $Y$. 
\end{definition}

Given a true probability and a belief function, we define their compatibility as follows. 

\begin{definition}
Given a probability $P$, we say that a belief function $Bel$ is compatible with $P$ if $P \geq Bel$. 
\end{definition}

For the sake of illustration, let us consider the following example with partial information in the re-identification process. This will be a running example of this paper. 

\begin{example}
\label{ex:kAnon.probs}
Let $X$ be a dataset with different attributes $V_1, \dots, V_m$. Consider the masking method $\rho$ where each attribute $V_i$ is protected by means of a generalization method $gen_{V_i}$ which ensures $k$-anonymity for $k=k_i$. That is, given that $X[V_i]$ represents the column of $X$ with attribute $V_i$, we have that $gen_{V_i}$ is applied to $X[V_i]$ for all $i \in \{1, \dots, m\}$, and $Y$ is defined in terms of the results of $gen_{V_i}$ putting their results side by side as follows 
$$Y:=\rho(X)=\left[gen_{V_1}(X[V_1])||\dots||gen_{V_m}(X[V_m])\right].$$

The true probability $P_{\rho,X,Y}$ for the re-identification of $X$ and $Y:=\rho(X)$ for a given record $y \in Y$ assigns the same non-zero probability to all records $x$ in $X$ such that $y$ can proceed from $x$ (taking into account the generalization processes $gen_{V_i}$), and assigns 0 to all other records. Formally, let the record $y$ be $y=(y_1, \dots, y_m)$ and let us define the candidate set of $y$ as the records $x \in X$ such that $y$ can proceed from $x$ (i.e., $CandidateSet(y)=\{x| y = (gen_{V_1}(x_1), \dots, gen_{V_m}(x_m))\}$). Then, the true probability of $x$ given $y$ is defined by: 

\begin{displaymath}
  P_{\rho,X,Y}(x | y) = \left\{
  \begin{array}{ll}
    \frac{1}{|CandidateSet(y)|} & \textrm{if~} x \in    CandidateSet(y) \\
    0                           & \textrm{if~} x \notin CandidateSet(y) \\
\end{array} \right.
\end{displaymath}

\end{example}
Re-identification methods that are applied to subsets of $Y$ consisting of only some attributes will lead to probability distributions that may be different from the true probability distribution. If this is the case, then they will be less informative. Note that, when only a subset of attributes $V' \subseteq \{V_1, \dots, V_m\}$ are considered, then the re-identification algorithm may select more candidates than there are in the true candidate set. 

In the next example we consider the re-identification of the $i$th register of $Y$ taking into account only partial knowledge consisting of some of the attributes. 

\begin{example}
\label{ex:kAnon.probs.reidentif}
Let $X$, $\rho$, $Y:=\rho(X)$, $y \in Y$ and $gen_{V_i}$ be defined as in Example~\ref{ex:kAnon.probs}. Let $Attrs(j)$ for $j=1, \dots, 2^m$ represent all possible (non-empty) subsets of attributes of $V=\{V_1, \dots, V_m\}$ indexed by $j$. Let $y_{j} \subseteq \mathcal{P}(Y)$ for $j=1, \dots, 2^m$ represent a record of the database $Y$ restricted to $Attrs(j)$. An example of an indexation of attribute subset is $Attrs(1)=\{V_1\}$ and $Attrs(3)=\{V_1, V_2\}$. 

Then, we expect a re-identification method applied to $y_{j}$, and taking into account how $Y$ is generated from $X$ using $\rho$, to deliver the following probability distribution: 

\begin{displaymath}
  r(y_{j},a)[i] = \left\{
  \begin{array}{ll}
    \frac{1}{|CandidateSet_{Attrs(j)}(y)|} & \textrm{if~} x_i \in    CandidateSet_{Attrs(j)}(y) \\
    0                                      & \textrm{if~} x_i \notin CandidateSet_{Attrs(j)}(y) \\
\end{array} \right.
\end{displaymath}
where $CandidateSets_{A}(y_j)$ includes a record $x \in X$ if $y$ can proceed from $x$ when only the attributes in $A$ are considered. 
\end{example}

It is easy to see that for any record $y$ in $Y$, we have 
$$CandidateSet(y) \subseteq CandidateSet_{Attrs(j)}(y)$$ 
and therefore $P_{\rho,X,Y}(x_i|y) \geq r(y,a)[i]$ for all $x_i$ in $X$. 

In this example the re-identification method assigns to all records in the candidate set the same probability. This is the usual way to assign probabilities under the principle of indifference. Nevertheless, in reality we only know that the true match is one of the records in the candidate set and that we do not have any preference on them. If we instead allow the re-identification to return a belief function, then this situation can be properly expressed. We define now the concept of {\em re-identification method expressing uncertainty}, a re-identification method that is not required to assign probabilities to singletons.


\begin{definition}
\label{def:reidentification.amb.belief}
Let $\rho$ be a method for anonymization of databases, $X$ a table with $n$ records indexed by $I$ in the space of tables $D$ and $Y:=\rho(X)$ the anonymization of $X$ using $\rho$. Let $P_{\rho,X,Y}(x | y_i)$ be the true probability of $\rho$, $X$ and $Y$. 
Then a {\em re-identification method expressing uncertainty} is a function that, given a collection of entries $y$ in $\mathcal{P}(Y)$ and some additional information from a space of auxiliary informations $A$, returns the belief function compatible with the true probability $P_{\rho,X,Y}(x_i | y)$ that $y$ are entries from the record with index $i \in I$
$$\begin{array}{rccc}r^*:&\mathcal{P}(Y) \times A &\rightarrow &[0,1]^{2^{|X|}}\\\\
&(y,a) &\mapsto &\left(m(y\textrm{ proceeds from a record in }B): B \subseteq X\right).
\end{array}$$
\end{definition}


As for any belief function, in this definition we expect 
\begin{description}
\item [(1)] $m(X)=1$ and $m(A)=0$ for all $A \neq X$ when there is no evidence on which are the original records corresponding to the protected record $y$, and 
\item [(2)] an increment of the belief function for $B \subseteq X$ when the evidence increases for records in $B$. 
\end{description}
In addition, the same belief functions will apply to different protected records whenever these have the same values. Formally, we have that $Bel(y,a)=Bel(y',a)$ if $y=y'$ holds. 

The following example illustrates the use of a re-identification method expressing uncertainty. 

\begin{example}
\label{ex:kAnon.probs.belief}
Let $X$, $\rho$, $Y:=\rho(X)$, $y \in Y$, $y_j$, $gen_{V_i}$, $CandidateSet$ and $CandidateSet_{A}$ be defined as in Examples~\ref{ex:kAnon.probs} and~\ref{ex:kAnon.probs.reidentif}. Then, the belief function $r^*(y_j,a)$ that better represents the uncertainty is defined by the following assignment: 

\begin{displaymath}
  m(A) = \left\{
  \begin{array}{ll}
    1                                      & \textrm{if~} A = CandidateSet_{Attrs(j)}(y) \\
    0                                      & \textrm{otherwise} \\
\end{array} \right.
\end{displaymath}

Therefore, for all $B \subseteq X$, 
$$r^*(y_{j},a)(B) = \sum_{A \subseteq B} m(A)$$

It is easy to see that $r^*(y_j,a)(B)=1$ if and only if $CandidateSet_{Attrs(j)}(y) \subseteq B$. 
\end{example}

For the belief functions in this example we can prove the following.

\begin{proposition}
The belief functions $r^*(y_j,a)$ defined in Example~\ref{ex:kAnon.probs.belief} are compatible with the true probability in Example~\ref{ex:kAnon.probs}. 
\end{proposition}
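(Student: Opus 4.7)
The plan is to unpack both objects and show the inequality $P_{\rho,X,Y}(\cdot\mid y) \geq r^*(y_j,a)$ pointwise on $2^X$, using only the elementary set-theoretic inclusion between the two candidate sets that was already noted just before Definition~\ref{def:reidentification.amb.belief}.

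First I would recall the explicit forms. By the last displayed line of Example~\ref{ex:kAnon.probs.belief}, the belief function $r^*(y_j,a)$ takes exactly two values:
\[
r^*(y_j,a)(B)=\begin{cases}1 & \text{if } CandidateSet_{Attrs(j)}(y)\subseteq B,\\ 0 & \text{otherwise.}\end{cases}
\]
On the other hand, from the definition of $P_{\rho,X,Y}$ in Example~\ref{ex:kAnon.probs}, the probability of an arbitrary set $B\subseteq X$ given $y$ is
\[
P_{\rho,X,Y}(B\mid y)=\frac{|B\cap CandidateSet(y)|}{|CandidateSet(y)|},
\]
which equals $1$ precisely when $CandidateSet(y)\subseteq B$.

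Next I would split into two cases on $B$. If $CandidateSet_{Attrs(j)}(y)\not\subseteq B$, then $r^*(y_j,a)(B)=0$, and since $P_{\rho,X,Y}(B\mid y)\geq 0$, the inequality holds trivially. If instead $CandidateSet_{Attrs(j)}(y)\subseteq B$, I would invoke the inclusion
\[
CandidateSet(y)\;\subseteq\;CandidateSet_{Attrs(j)}(y),
\]
already established in the text (a record compatible with $y$ on \emph{all} attributes is a fortiori compatible on any subset of attributes). Combining inclusions gives $CandidateSet(y)\subseteq B$, so $P_{\rho,X,Y}(B\mid y)=1=r^*(y_j,a)(B)$.

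There is essentially no hard step here; the only thing worth spelling out carefully is the inclusion between candidate sets, which amounts to observing that restricting the set of attributes used for matching can only enlarge the set of records that look compatible with $y$. Once that monotonicity is in place, the compatibility inequality $P_{\rho,X,Y}(B\mid y)\geq r^*(y_j,a)(B)$ follows uniformly in $B\subseteq X$, which is exactly the compatibility condition of the second definition preceding the proposition.
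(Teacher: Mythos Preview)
Your proof is correct and follows essentially the same approach as the paper: both split into the two cases $Bel(B)=0$ and $Bel(B)=1$, and in the nontrivial case invoke the inclusion $CandidateSet(y)\subseteq CandidateSet_{Attrs(j)}(y)$ to conclude $P(B\mid y)=1$. The only cosmetic difference is that you spell out the explicit formula for $P_{\rho,X,Y}(B\mid y)$, whereas the paper argues via monotonicity of $P$ along the chain $C\supseteq CandidateSet_{Attrs(j)}(y)\supseteq CandidateSet(y)$.
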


\begin{proof}
For simplicity, let us use the notation $Bel(B)=r^*(y_j,a)(B)$. We need to prove that $P(C) \geq Bel(C)$ for all $C \subseteq X$. Since $Bel(B) \in \{0,1\}$, we only need to check two cases. 
\begin{itemize}
\item When $Bel(C)=0$, it is clear that $P(C) \geq Bel(C)$ for all $C$. 
\item When $Bel(C)=1$, then $C \supseteq CandidateSet_{Attrs(j)}(y) \supseteq CandidateSet(y)$. Therefore, $P(C) \geq P(CandidateSet_{Attrs(j)}(y)) \geq P(CandidateSet(y)) = 1$. So, $P(C)=Bel(C)$. 
\end{itemize}
In the case of $Bel(C)=1$ we use the condition discussed above that 
$$CandidateSet(y) \subseteq CandidateSet_{Attrs(j)}(y).$$

This proves the proposition. 
\end{proof}

In contrast, if the re-identification method assigns $m(B)=1$ to a set $B$ that misses one record $x_i$ of the candidate set of $y$, then the inferred belief function is not compatible with the true probability. This is formalized in the next lemma. 

\begin{lemma}
Let $x_0$ be a record of the candidate set, let $B$ be an arbitrary subset of $X$, and let 
$$C_0=(B \cup CandidateSet(y)) \setminus \{x_0\}.$$ 
Let a re-identification method assign $m(C_0)=1$ and $m(C) = 0$ for all $C \subseteq X$ such that $C \neq C_0$. The belief function induced from $m$ is not compatible with the true probability. 
\end{lemma}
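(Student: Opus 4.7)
The plan is to exhibit a single subset $C \subseteq X$ where the required inequality $P(C) \geq Bel(C)$ from the definition of compatibility fails. The natural candidate is $C = C_0$ itself, since by construction the basic probability assignment puts all its mass on $C_0$, which forces $Bel(C_0)=1$, while the true probability should leak off $C_0$ precisely because $x_0$ has been removed.

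First I would unpack the induced belief function: since $m(C_0)=1$ and $m(C)=0$ otherwise, the formula $Bel(C)=\sum_{A\subseteq C}m(A)$ from Equation~(\ref{eq:chMeas:belFromM}) gives $Bel(C)=1$ iff $C_0\subseteq C$, and $Bel(C)=0$ otherwise. In particular $Bel(C_0)=1$.

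Next I would compute $P(C_0)$ using the true probability from Example~\ref{ex:kAnon.probs}, which assigns $1/|CandidateSet(y)|$ to each element of $CandidateSet(y)$ and $0$ elsewhere. The key observation is that
$$C_0 \cap CandidateSet(y) = \bigl((B\cup CandidateSet(y))\setminus\{x_0\}\bigr)\cap CandidateSet(y) = CandidateSet(y)\setminus\{x_0\},$$
where we use that $x_0\in CandidateSet(y)$ by hypothesis. Hence
$$P(C_0) = \frac{|CandidateSet(y)|-1}{|CandidateSet(y)|} < 1 = Bel(C_0),$$
so the compatibility condition $P \geq Bel$ is violated at $C_0$.

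There is really no hard step here; the only subtlety is making sure the witness $C_0$ actually satisfies $C_0 \subseteq C_0$ (so that $Bel(C_0)=1$ is correctly derived from the assignment) and that $x_0$ has genuinely been removed from $C_0$ even if $x_0$ also belonged to $B$, which is guaranteed by the outer set difference in the definition of $C_0$. The case $|CandidateSet(y)|=1$ is not an obstacle either: the formula still yields $P(C_0)=0<1$, so the conclusion holds uniformly.
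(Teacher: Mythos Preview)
Your proof is correct and follows essentially the same route as the paper: both exhibit $C_0$ as the witness set, observe that $Bel(C_0)=1$, and compute $P(C_0)=1-1/|CandidateSet(y)|<1$ to violate compatibility. If anything, your version is more carefully written, since you spell out the intersection $C_0\cap CandidateSet(y)$ and explicitly handle the edge case $|CandidateSet(y)|=1$.
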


\begin{proof}
It is easy to see that the belief function satisfies $Bel(C_0)=1$. Nevertheless, since $C_0$ does not include $x_0$, we have that the true probabiliy for $C_0$ is 
$$P(C_0) = 1 - \frac{1}{|CandidateSet(y)|}.$$
Therefore, as $P(C_0) < Bel(C_0)$, the belief function is not compatible with the probability. 
\end{proof}

It is important to note that this result removes from the set of valid re-identification methods expressing uncertainty those that miss the correct records from the candidate set. 

\subsection{Pignistic transformation and re-identification methods}
In the previous section we defined a general re-identification method that returns a belief function, and this belief function is required to be compatible with the true probability. Nevertheless, as re-identification methods in real applications return probabilities, we reconsider our definition of re-identification algorithms so that they also return probability distributions. Nevertheless, these probability distributions are required to proceed from the belief function. 

In particular, the probability is constructed from the belief function following the principle of insufficient reason (or principle of indifference). That is, the assignment $m$ to a set is distributed to the singletons of this set according to a uniform distribution. We say that a probability constructed in this way is compatible with the original distribution. 

This construction precisely corresponds to the {\em pignistic transformation} and follows the transferable belief model by Smets. Details of a characterization of the transformation is given in~\cite{ref:Smets.Kennes.1994}. This pignistic transformation was defined in Definition~\ref{def:pignisticTransformation}. 

\begin{definition}
Given two probabilities $P$ and $P'$, we say that $P'$ is compatible with $P$ if there exists a belief function $Bel$ compatible with $P$ such that $P'$ is the pignistic probability distribution derived from $Bel$ (i.e., $P'=P_{Bel}$). 
\end{definition}

We now present the pignistic probabilities for the running example. 

\begin{example}
The pignistic probabilities for the belief functions of Example~\ref{ex:kAnon.probs.belief} for $y_j$ are as follows: 

\begin{displaymath}
  P(y_j,a)[i] = \left\{
  \begin{array}{ll}
    \frac{1}{|CandidateSet_{Attrs(j)}(y)|}                                     & \textrm{if~} x_i \in CandidateSet_{Attrs(j)}(y) \\
    0                                                                          & \textrm{otherwise.} \\
\end{array} \right.
\end{displaymath}
\end{example}

The probabilities defined in this example satisfy the following inequalities: 

\begin{itemize}
\item If $x_i \in CandidateSet(y)$, then $x_i \in CandidateSet_{Attrs(j)}(y)$ for all $j$. Then, we have 
$$P(y_j,a)[i]=\frac{1}{|CandidateSet_{Attrs(j)}(y)|} \leq P_{\rho,X,Y}(x_i|y) = \frac{1}{|CandidateSet(y)|}$$
\item If $x_i \notin CandidateSet(y)$ and $x_i \in CandidateSet_{Attrs(j)}(y)$, then we have 
$$P(y_j,a)[i]=\frac{1}{|CandidateSet_{Attrs(j)}(y)|} \geq P_{\rho,X,Y}(x_i|y) = 0$$
\item If $x_i \notin CandidateSet(y)$ and $x_i \notin CandidateSet_{Attrs(j)}(y)$, then we have
$$P(y_j,a)[i] = P_{\rho,X,Y}(x_i|y) = 0$$
\end{itemize}

Therefore, in general, for a given $x_i$ the pignistic probability can be either larger or smaller than the true probability. However, it cannot assign a zero probability to an $x_i$ in the candidate set. 

In fact, the next proposition proves that the support of the probability of the re-identification method should contain the support of the true probability. 

\begin{proposition}
\label{prop.16}
Let $P$ be a true probability, let $Bel$ be a belief function compatible with $P$, let $P'$ be the pignistic probability derived from $Bel$. Let $B_{P}$ be the support of $P$ (i.e., $B_P=\{x | P(x)\neq 0\}$), and let $B_{P'}$ be the support of $P'$. Then, $B_{P} \subseteq B_{P'}$. 
\end{proposition}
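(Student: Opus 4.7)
The plan is to prove the contrapositive: if $x \notin B_{P'}$, then $x \notin B_P$. So I would fix $x \in X$ with $P'(\{x\}) = 0$ and try to conclude $P(\{x\}) = 0$, using only the hypothesis $P \geq Bel$ and the pignistic formula from Definition~\ref{def:pignisticTransformation}.

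First I would unpack $P'(\{x\}) = 0$ using the pignistic transformation. Since
$$P'(\{x\}) = \sum_{\{B \in 2^X : x \in B\}} \frac{m(B)}{|B|}$$
is a sum of non-negative terms, its vanishing forces $m(B) = 0$ for every $B \subseteq X$ containing $x$. This is the key consequence to extract.

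Next I would pick a single well-chosen set on which to invoke compatibility, namely $X \setminus \{x\}$. Using the definition of $Bel$ from $m$ together with the step above,
$$Bel(X \setminus \{x\}) = \sum_{B \subseteq X \setminus \{x\}} m(B) = \sum_{B : x \notin B} m(B) = \sum_{B \subseteq X} m(B) - \sum_{B : x \in B} m(B) = 1 - 0 = 1.$$
Compatibility $P \geq Bel$ then gives $P(X \setminus \{x\}) \geq 1$, hence $P(X \setminus \{x\}) = 1$, and therefore $P(\{x\}) = 0$, which means $x \notin B_P$. Taking the contrapositive yields the desired inclusion $B_P \subseteq B_{P'}$.

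There is no real obstacle here: the only mildly non-obvious idea is choosing to test compatibility against the complement $X \setminus \{x\}$ rather than against $\{x\}$ directly, because belief functions lower-bound probabilities (so $Bel(\{x\}) \leq P(\{x\})$ would be useless for concluding $P(\{x\}) > 0$), whereas on the complement the lower bound forces $P$ to concentrate off $x$. Everything else is a one-line manipulation of the pignistic formula and of $Bel_m$.
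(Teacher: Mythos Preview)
Your proof is correct and is essentially the contrapositive of the paper's own argument. The paper starts from $P(\{x_0\})>0$, applies compatibility at $X\setminus\{x_0\}$ to get $\sum_{C:\,x_0\in C} m(C)>0$, and concludes $P'(\{x_0\})>0$; you run the same computation in reverse, but the pivotal step---testing $P\geq Bel$ on the complement $X\setminus\{x\}$ and using $\sum_{B}m(B)=1$---is identical.
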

\begin{proof}
Let $x_0$ be an arbitrary element of the support of $P$, so $P(x_0)\neq 0$. Let $\alpha=1-P(x_0)$. Then, taking into account that $Bel$ is compatible with $P$, we have
$$1> \alpha = 1-P(x_0)=P(X \setminus \{x_0\}) \geq Bel(X \setminus \{x_0\}) = \sum_{C \subseteq X \setminus \{x_0\}} m(C). $$
So, 
$$1 - \sum_{C \subseteq X \setminus \{x_0\}} m(C) > 0.$$
As $\sum_{C \subseteq X} m(C) = 1$, we have
$$\sum_{C \subseteq X: x_0 \in C} m(C) = 1 - \sum_{C \subseteq X \setminus \{x_0\}} m(C) > 0.$$
So, there exists at least one $C$ such that $x_0 \in C$ and $m(C)\neq 0$. 

Then, by definition of the pignistic transformation, for all $x \in B$, $P'(x)\neq 0$. Therefore, $x_0$ is an element of the support of $P'$. As $x_0$ is an arbitrary element of the support of $P$, the statement is proven. 
\end{proof}

Now we will discuss two properties of the probability of re-identification that concern the case in which the probability is one for a single record. That is, we have that the true probability is a Dirac delta distribution at a single record $x_0$. This distribution is denoted by $\delta(x_0)$ and its value is 1 if and only if $x=x_0$. Note that this case is possible in Example~\ref{ex:kAnon.probs} when the intersection of the candidate sets of two (or more variables) is a singleton. Formally, $|CandidateSet_{V_i}(y) \cap CandidateSet_{V_j}(y)|=1$. A similar situation was exploited in~\cite{ref:Nin.Herranz.Torra.2008:DKE.rankSwapping} to attack rank swapping. 

\begin{lemma}
\label{lemma.bel.i.p.singleton}
Let $P$ be a Dirac delta distribution at $x_0$. Let $Bel$ be a belief function compatible with $P$. Then, $m(A)=0$ if any only if $A \cap \{x_0\}=\emptyset$. 
\end{lemma}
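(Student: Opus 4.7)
The plan is to exploit the compatibility inequality $P \geq Bel$ on a single well-chosen set, namely $X \setminus \{x_0\}$, and then translate the resulting vanishing of $Bel$ there into vanishing of the assignment $m$ on every subset disjoint from $\{x_0\}$.

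First, since $P$ is Dirac at $x_0$, we have $P(\{x_0\}) = 1$ and hence $P(X \setminus \{x_0\}) = 0$. Applying compatibility and the fact that belief functions are nonnegative, $0 \leq Bel(X \setminus \{x_0\}) \leq P(X \setminus \{x_0\}) = 0$, so $Bel(X \setminus \{x_0\}) = 0$.

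Second, by the formula $Bel(C) = \sum_{B \subseteq C} m(B)$ applied to $C = X \setminus \{x_0\}$, we obtain
$$\sum_{B \subseteq X \setminus \{x_0\}} m(B) = 0.$$
Each $m(B)$ is nonnegative, so every summand must vanish. Thus $m(A) = 0$ for every $A$ with $A \cap \{x_0\} = \emptyset$, which establishes the direction "$A \cap \{x_0\} = \emptyset \Rightarrow m(A) = 0$".

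The main obstacle is the converse direction --- equivalently, "$x_0 \in A \Rightarrow m(A) \neq 0$" --- because it actually fails in general: taking $m(\{x_0\}) = 1$ (so that $Bel = P$) produces a belief function trivially compatible with $P$ for which $m(\{x_0, x\}) = 0$ for every $x \neq x_0$, despite $\{x_0, x\} \cap \{x_0\} \neq \emptyset$. I would therefore read the lemma as the support characterization $m(A) \neq 0 \Rightarrow x_0 \in A$, which is exactly the contrapositive of the implication proved above and is the substantive content consistent with the compatibility hypothesis.
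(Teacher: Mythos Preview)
Your proof of the direction ``$A \cap \{x_0\} = \emptyset \Rightarrow m(A)=0$'' is essentially the paper's own argument: the paper applies $P(A)\geq Bel(A)$ to each such $A$ individually and concludes $m(A)=0$ from nonnegativity of the summands in $Bel(A)=\sum_{B\subseteq A} m(B)$; you do the same computation once on the maximal set $X\setminus\{x_0\}$, which is slightly cleaner but not different in substance.

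For the converse, the paper does not leave it open: it asserts that ``$m(A)=0 \Rightarrow A\cap\{x_0\}=\emptyset$'' is a corollary of Proposition~\ref{prop.16}. Your counterexample ($m(\{x_0\})=1$, all other masses zero) is correct and shows that this biconditional is false as stated. Proposition~\ref{prop.16} (more precisely, the step in its proof) only guarantees that \emph{some} focal element contains $x_0$, not that \emph{every} set containing $x_0$ is focal --- so the paper's citation does not deliver what is claimed. Your suggested reading, namely the contrapositive ``$m(A)\neq 0 \Rightarrow x_0\in A$'', is exactly the content that survives, and it is indeed the only part of the lemma invoked downstream (in the proof of Proposition~\ref{prop:P.singleton.llavors.P.peak}). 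So your analysis is sharper than the paper's on this point.
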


\begin{proof}
Suppose that $A \cap \{x_0\}=\emptyset$, then we have $P(A)=0$. 

As $P(A)\geq Bel(A)$, then we have $Bel(A)=0$. 

Therefore, as $Bel(A)=0=\sum_{B \subset A} m(B)$ and $m(B) \geq 0$ for all $B \subseteq A$, we have $m(A)=0$. 

The fact that $m(A)=0$ implies $A \cap \{x_0\}=0$ is a corollary of Proposition~\ref{prop.16}. 
\end{proof}

\begin{proposition}
\label{prop:P.singleton.llavors.P.peak}
Let $P(i)=r(y,a)$ be a Dirac delta distribution at $i_0$. Let $P'$ be a probability compatible with $P$. Then, if $P'$ has its maximum in $i'$ (i.e. $i' = \arg \max P'(i)$), then $i'=i_0$. 
\end{proposition}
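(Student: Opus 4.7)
The plan is to unfold the definition of compatibility, apply Lemma \ref{lemma.bel.i.p.singleton} to constrain the assignment $m$ of the underlying belief function, and then compare the pignistic probabilities of $i_0$ and of an arbitrary $i \neq i_0$ by noting that the sum for $i_0$ ranges over a superset of the sum for $i$.

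First I would fix a belief function $Bel$ with assignment $m$ such that $P \geq Bel$ and $P'$ equals the pignistic probability $P_{Bel}$; such a $Bel$ exists by the definition of $P'$ being compatible with $P$. Since $P$ is a Dirac delta at $i_0$, Lemma \ref{lemma.bel.i.p.singleton} forces $m(A) = 0$ for every $A$ with $i_0 \notin A$, i.e.\ every $A$ carrying positive mass contains $i_0$.

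Second, I would plug this into the pignistic formula of Definition~\ref{def:pignisticTransformation}. For $i_0$,
$$P'(i_0) \;=\; \sum_{A:\, i_0 \in A} \frac{m(A)}{|A|} \;=\; \sum_{A:\, m(A)>0} \frac{m(A)}{|A|},$$
because the two index sets coincide up to terms that contribute zero. For an arbitrary $i \neq i_0$, every $A$ with $i \in A$ and $m(A)>0$ must also contain $i_0$ by the preceding step, so
$$P'(i) \;=\; \sum_{A:\, i \in A,\, m(A)>0} \frac{m(A)}{|A|} \;\leq\; \sum_{A:\, m(A)>0} \frac{m(A)}{|A|} \;=\; P'(i_0).$$
Hence $P'$ attains its maximum at $i_0$, giving $i' = i_0$.

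The only subtlety I anticipate is that the argmax need not be unique: if some $m(A)>0$ for an $A$ containing both $i_0$ and another element, the pignistic masses can tie. So the statement is really establishing that $i_0 \in \arg\max P'$, which matches the intent of the proposition and becomes a genuine equality as soon as the argmax is a singleton. I would mention this briefly rather than try to add extra hypotheses to force uniqueness.
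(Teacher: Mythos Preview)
Your proof is correct and follows essentially the same route as the paper: invoke Lemma~\ref{lemma.bel.i.p.singleton} to force every focal set to contain $i_0$, then compare the pignistic sums for $i_0$ and any other $i$ by inclusion of index sets. Your explicit remark about the non-uniqueness of the argmax is a welcome clarification that the paper's proof leaves implicit.
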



\begin{proof}
Suppose that $P(i)=0$ for all $i \neq i_0$, and $P(i_0)=1$. Lemma~\ref{lemma.bel.i.p.singleton} implies that $m(A)=0$ for all $A \cap \{i_0\}=\emptyset$. Let ${\cal A}=\{A | m(A) \neq \emptyset\}$, then for any belief function compatible with $P$, 
$$P_{Bel}(\{i_0\})=\sum_{A: i_0 \in A} \frac{m(A)}{|A|} = \sum_{A \in {\cal A}} \frac{m(A)}{|A|} \geq \sum_{A \in {\cal A} : A\cap\{i_1\}=\{i_1\}} \frac{m(A)}{|A|} = P_{Bel}(\{i_1\}) $$
for all $i_1 \neq i_0$. 

If $P'$ is compatible with $P$, then there exists a belief function $Bel$ compatible with $P$ such that $P'=P_{Bel}$. 
\end{proof}

Note that this proposition is valid only when the true probability is a Dirac delta distribution. However this should not usually be the case if the data protection algorithm is effective. For example, in Example~\ref{ex:kAnon.probs} there may be $y$ with $|CandidateSet(y)|=1$, but for other $|CandidateSet(y)|=k>1$, so that the probability is $1/k < 1$. In general, we might even have that the record with a maximal probability is not one of the records in the candidate set. The next proposition establishes this fact. 

\begin{proposition}
\label{prop:maxim.p.no.solucio}
Let $X$ be a reference set with $|X|\geq 3$, let $A \subseteq X$ be a set of $k \geq 2$ records with a true probability for record $y$ equal to $1/k$. Then it is possible to have a probability compatible with the true probability such that the record with maximum probability is none of the ones in $A$. 
\end{proposition}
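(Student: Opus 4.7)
The plan is to construct an explicit belief function witnessing the statement. Since we need the maximum of the pignistic probability to be attained outside $A$, we must be in a situation where $X \setminus A$ is nonempty; so fix some $x_0 \in X \setminus A$, and label the elements of $A$ as $x_1,\dots,x_k$. The true probability is then $P(\{x_i\})=1/k$ for $i=1,\dots,k$ and $P(\{x\})=0$ otherwise. The idea is to choose a basic probability assignment $m$ that puts all of its mass on two-element sets each containing $x_0$, so that under the pignistic transformation half of the mass is funneled into $x_0$.

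Concretely, I would define
\[ m(\{x_i,x_0\}) = \tfrac{1}{k} \quad \text{for } i=1,\dots,k, \]
and $m(B)=0$ for every other $B\subseteq X$. This is a valid basic probability assignment since the masses are nonnegative and sum to $1$. Next I would verify that the induced belief function $Bel$ is compatible with $P$ by a case split on whether $x_0\in C$: if $x_0\notin C$ no focal set of $m$ lies inside $C$, giving $Bel(C)=0\le P(C)$; if $x_0\in C$ then $\{x_i,x_0\}\subseteq C$ iff $x_i\in C$, so $Bel(C)=|A\cap C|/k=P(A\cap C)=P(C)$. In both cases $P(C)\ge Bel(C)$, so $Bel$ is compatible with $P$.

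Finally I would compute the pignistic transformation directly from Definition~\ref{def:pignisticTransformation}. For each $x_i\in A$ the only focal set containing $x_i$ is $\{x_i,x_0\}$, giving $P_{Bel}(\{x_i\})=\frac{1}{2k}$. For $x_0$, all $k$ focal sets contribute, giving $P_{Bel}(\{x_0\})=k\cdot\frac{1/k}{2}=\frac{1}{2}$. For any remaining element $x\in X\setminus(A\cup\{x_0\})$ we obtain $P_{Bel}(\{x\})=0$. Since $k\ge 2$ implies $\frac{1}{2k}\le\frac{1}{4}<\frac{1}{2}$, the record $x_0\notin A$ is the unique maximizer of $P_{Bel}$, establishing the proposition.

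There is no real obstacle here beyond picking the right focal sets; the only point worth stating carefully is why compatibility forces us to always pair each $x_i$ with the same outside element $x_0$, since routing mass through $x_0$ is precisely what lets the pignistic transformation concentrate probability outside of $A$ without violating $P\ge Bel$ on any subset.
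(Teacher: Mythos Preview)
Your proof is correct and follows essentially the same approach as the paper: both construct the basic probability assignment $m(\{x_0,x_i\})=1/k$ for a fixed $x_0\notin A$, verify compatibility via the same case split on whether $x_0\in C$, and compute the pignistic probability to find $P_{Bel}(\{x_0\})=1/2>1/(2k)=P_{Bel}(\{x_i\})$. Your write-up is slightly more explicit in noting that $X\setminus A$ must be nonempty and in checking that $x_0$ is the \emph{unique} maximizer, but the argument is the same.
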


\begin{proof}
Let $P$ represent the true probability. Then $P(x)=1/k$ for all $x \in A$. 

Consider a record $x_0$ not in $A$. Therefore $P(x_0)=0$. Then, define a belief function $Bel$ in terms of $m$ as follows: 

\begin{displaymath}
  m(C) = \left\{
  \begin{array}{ll}
    \frac{1}{k}                 & \textrm{if~} C = \{x_0,x\} \textrm{ for any } x \in A \\
    0                           & \textrm{otherwise} \\
\end{array} \right.
\end{displaymath}

First we prove that this belief function is compatible with $P$. To do so, we need to prove that $P(B) \geq Bel(B)$ for all $B \subset X$. To do so, we consider two cases for the sets $B \subseteq X$ according to the membership of $x_0$ to $B$. Note that in both cases we have $P(B)=|B \cap A| \cdot (1/k)$. 

\begin{itemize}
\item Case $x_0 \in B$: As $Bel(B)=|B \cap A| \cdot (1/k)$, we have that $P(B) = Bel(B)$. 
\item Case $x_0 \notin B$: As $Bel(B)=0$, we have that $P(B) \geq Bel(B)=0$. 
\end{itemize}

Now we consider the pignistic probability from $Bel$. It is easy to prove that 

\begin{displaymath}
  P_{Bel}(x) = \left\{
    \begin{array}{ll}
      \frac{1/k}{2}                 & \textrm{if~} x \in A \\
      1/2                           & \textrm{if~} x = x_0 \\	 
      0                             & \textrm{otherwise} \\
\end{array} \right.		
\end{displaymath}

Therefore, we have that $x_0$ is the record with maximum pignistic probability when precisely $x_0$ is not in $A$. 
\end{proof}


This proposition implies the following corollary. 

\begin{corollary}
Given a compatible probability $P'$ of a true probability $P$, the record with maximal value in $P'$ can be different from the record with maximal value in $P$. 
\end{corollary}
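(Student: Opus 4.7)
The corollary is a pure existence claim (``can be different''), so the plan is simply to exhibit a single configuration in which the two argmaxes disagree, and the configuration has already been built in Proposition~\ref{prop:maxim.p.no.solucio}.

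First I would invoke Proposition~\ref{prop:maxim.p.no.solucio} to choose a reference set $X$ with $|X|\geq 3$, a subset $A\subseteq X$ with $|A|=k\geq 2$, and let $P$ be the uniform distribution on $A$ (so $P(x)=1/k$ for $x\in A$ and $P(x)=0$ otherwise). The proposition then supplies a belief function $Bel$ compatible with $P$ whose pignistic transformation $P'=P_{Bel}$ attains its maximum at some $x_0\notin A$.

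Next I would close the argument by comparing the locations of the two maxima. Any maximizer of $P$ lies in its support $A$ (since $P$ vanishes off $A$), while by construction a maximizer of $P'$ is $x_0\in X\setminus A$. Hence the two argmax records are distinct, proving that the statement ``can be different'' is realized.

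There is no real obstacle here; the only point to be mildly careful about is that the corollary is asserted as an existential statement, and one should not over-claim by trying to show that maximizers \emph{always} differ. Proposition~\ref{prop:maxim.p.no.solucio} already does the only nontrivial work (checking compatibility of $Bel$ with $P$ and computing the pignistic transform), and the corollary merely repackages its conclusion.
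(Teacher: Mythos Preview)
Your proposal is correct and matches the paper's own treatment exactly: the paper simply states that Proposition~\ref{prop:maxim.p.no.solucio} implies the corollary, and your argument spells out precisely that implication by noting that the maximizer of $P$ lies in $A$ while the maximizer of $P'$ is the constructed $x_0\notin A$.
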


The next example illustrates a masking method that leads to a belief function and a probability distribution as the one in the above Proposition~\ref{prop:maxim.p.no.solucio}. 

\begin{example}
Let $\rho$ be a masking method defined as follows for records $x$ in ${\mathbb{N}}^3$. 
\begin{enumerate}
\item Let $\alpha$ be a random number in $\{0,1\}$ according to a uniform distribution. 
\item Let $\beta$ be a random number in $\{1,2,3\}$ according to a uniform distribution. 
\item Let $y:=x + \alpha e_{\beta}$ where $e_i$ is the unit vector in ${\mathbb{N}}^3$.
\end{enumerate}
Given $A=\{x_0=(0 0 0), x_1=(1 0 0), x_2=(0 1 0), x_3=(0 0 1)\}$ and $X \supseteq A$, we can model the re-identification of $y=(0 0 0)$ by means of the belief function in Proposition~\ref{prop:maxim.p.no.solucio}. Therefore, when we guess by selecting the most probable record using the pignistic transformation, we select $x_0$. However, if $\alpha$ is known to be zero, $x_0$ is impossible. 
\end{example}

The results given in this section describe the behaviour of our formalization for re-identification. At the same time, they give constraints on what we consider to be a proper re-identification algorithm, and, thus, they define the minimal requirements for these algorithms. 

\subsection{Evidence and uncertainty measures}
When new information is given to the re-identification algorithm, the belief function is updated according to this new evidence. The most particular case is when we consider that mass is transferred to a set $C_1$ from a larger set $C_2$. That is, we increment the mass of $C_1$ while reducing the one of $C_2$ and not modifying the rest of sets. 

The literature presents several definitions of uncertainty measures to evaluate either belief functions or probability distributions. Klir and Wierman~\cite{ref:Klir.Wierman.1999} give an account and a classification of some of these measures. 

In this section, we first show with examples that the entropy of the pignistic probability is not monotonic. We give these examples because entropy is often interpreted as a measure of information and, as such, one might think that in our case $C_1$ is more informative than $C_2$. As the examples show, in some cases the entropy is monotonic to this type of transformations, but in other cases it is not. 

Later, we prove that the measure of nonspecificity is monotonic with respect to the changes caused by transferring evidence from $C_2$ to $C_1$. This measure was defined by Dubois and Prade~\cite{ref:Dubois.Prade.1985:nonspecificity} as a generalization of the measure by Higashi and Klir~\cite{ref:Higashi.Klir.1983}. A characterization of this measure was given by Ramer in~\cite{ref:Ramer.1987} (see Klir and Wierman~\cite{ref:Klir.Wierman.1999} for details). We consider that this measure represents better the quantity of information in a belief function. 

The section finishes with a discussion on the uncertainty measures for re-identification. 

\begin{definition} 
The entropy of a belief function $Bel$ is defined as the entropy of the pignistic probability distribution derived from $Bel$. That is, 
\begin{equation}
Entropy(Bel)=\sum_{x \in X} P_{Bel}(x) \log P_{Bel}(x) 
\end{equation}
\end{definition}

Let us now consider two examples of the entropy of belief functions. 

\begin{example}
Let $X=\{x_1, \dots, x_8\}$ and let $Bel$ be the belief function defined by 
\begin{itemize}
\item $m(\{x_1, \dots, x_5\})=0.07692307 \cdot 5$
\item $m(\{x_1, \dots, x_8\})=0.07692307 \cdot 8$
\end{itemize}
The pignistic probability $P_{Bel}$ corresponds to: 
\begin{itemize}
\item $P_{Bel}(x_1)=P_{Bel}(x_2)=P_{Bel}(x_3)=P_{Bel}(x_4)=P_{Bel}(x_5)=0.15384614$
\item $P_{Bel}(x_6)=P_{Bel}(x_7)=P_{Bel}(x_8)=0.07692307$
\end{itemize}

Define $Bel'$ by {\em transferring} mass from $C_2=\{x_1, \dots, x_8\}$ to $C_1=\{x_1,x_2\}$. We have that, $C_1 \subseteq C_2$, and, therefore, $C_1$ is more specific than $C_2$. Let be the {\em transferred} mass be equal to $\Delta= 0.038461544 \cdot 8$. Therefore, we have that the new belief function is defined by: 
\begin{itemize}
\item $m'(\{x_1, x_2\})=m(\{x_1,x_2\})+\Delta = m(\{x_1,x_2\})+0.038461544 \cdot 4 \cdot 2$
\item $m'(\{x_1, \dots, x_5\})=m(\{x_1, \dots, x_5\}) = 0.07692307 \cdot 5$
\item $m'(\{x_1, \dots, x_8\})=0.07692307 \cdot 8 - \Delta = 0.07692307 \cdot 8  - 0.038461544 \cdot 8 = 0.03846153 \cdot 8$
\end{itemize}
The pignistic probability $P_{Bel'}$ corresponds to: 
\begin{itemize}
\item $P_{Bel'}(x_1)=P_{Bel'}(x_2)= 0.038461544 \cdot 4 = 0.15384617$
\item $P_{Bel'}(x_3)=P_{Bel'}(x_4)=P_{Bel'}(x_5)= 0.07692307 + 0.03846153 = 0.1153846$
\item $P_{Bel'}(x_6)=P_{Bel'}(x_7)=P_{Bel'}(x_8)= 0.03846153$
\end{itemize}

The entropy of $P_{Bel}$ is 2.0317593 and the entropy of $P_{Bel'}$ is 1.8300099. So, in this case transferring mass/evidence from a larger set to a smaller one reduces entropy. 
\end{example}

\begin{example}
\label{example.2.entropy}
Let $X=\{x_1, \dots, x_{10}\}$. Let $Bel$ be the belief function defined by: 
\begin{itemize}
\item $m(\{x_1, \dots, x_{10}\})= 0.08333332 \cdot 10$
\item $m(\{x_1, x_2\})=0.08333332 \cdot 2$
\end{itemize}
The pignistic probability $P_{Bel}$ corresponds to: 
\begin{itemize}
\item $P_{Bel}(x_1)=P_{Bel}(x_2)=0.16666664$
\item $P_{Bel}(x_3)=P_{Bel}(x_4)=P_{Bel}(x_5)\\
       =P_{Bel}(x_6)=P_{Bel}(x_7)=P_{Bel}(x_8)\\
       =P_{Bel}(x_9)=P_{Bel}(x_{10})=0.08333332$
\end{itemize}

Define $Bel'$ by {\em transferring} mass equivalent to $\Delta = 0.08333332 \cdot 10$ from $C_2=\{x_1, \dots, x_{10}\}$ to $C_1=\{x_3,\dots, x_{10}\}$. Then, the new belief function $Bel'$ is defined by: 
\begin{itemize}
\item $m'(\{x_1, \dots, x_{10}\})= m(\{x_1, \dots, x_{10}\}) - \Delta = 0$ 
\item $m'(\{x_3, \dots, x_{10}\})= m(\{x_3, \dots, x_{10}\}) + \Delta = 0 + 0.08333332 \cdot 10 \\
       = 0.10416665 \cdot 8$
\item $m'(\{x_1, x_2\})=m(\{x_1, x_2\})=0.08333332 \cdot 2$
\end{itemize}
Therefore, its pignistic probability $P_{Bel'}$ corresponds to: 
\begin{itemize}
\item $P_{Bel}(x_1)=P_{Bel}(x_2)= 0.08333332$
\item $P_{Bel}(x_3)=P_{Bel}(x_4)=P_{Bel}(x_5)\\
       =P_{Bel}(x_6)=P_{Bel}(x_7)=P_{Bel}(x_8)\\
       =P_{Bel}(x_9)=P_{Bel}(x_{10})=0.10416665$
\end{itemize}

Here, we have that the entropy of $P_{Bel}$ is 2.2538579 while the one of $P_{Bel'}$ is 2.2989538. So, we have that the entropy of the pignistic distribution of the belief function with more information $P_{Bel'}$ is larger than the entropy of the other distribution $P_{Bel}$. 
\end{example}

The behaviour of the entropy in these two examples can be explained from the fact that the entropy is a Schur-concave function (see e.g.~\cite{ref:Marshall.Olkin.Arnold.2011} for details). In the first example, $P_{Bel}$ majorizes $P_{Bel'}$, and therefore $entropy(P_{Bel}) \geq entropy(P_{Bel'})$. In the second example, is $P_{Bel'}$ who majorizes $P_{Bel}$ and thus $entropy(P_{Bel'}) \geq entropy(P_{Bel})$. 

We prove now that the measure of nonspecificity is monotonic with respect to a mass transfer. First we introduce this measure. 

\begin{definition} \cite{ref:Dubois.Prade.1985:nonspecificity}
The measure of non-specificity $N$ for a belief function $Bel$ is defined by
\begin{equation}
N(Bel)=\sum_{A \subseteq X} m(A) \log |A|
\end{equation}
\end{definition}

For this measure, the following holds. 

\begin{proposition}
\label{prop:Beliefs.nonspecificity}
Let $C_1$, $C_2$ be two subsets of $X$ such that $C_1 \subseteq C_2$, let $Bel$ be a belief function defined by $m$ and $Bel'$ a belief function defined by the following $m'$
\begin{itemize}
\item $m'(C_1)=m(C_1)+\Delta$
\item $m'(C_2)=m(C_2)-\Delta$
\item $m'(A)  =m(A)$  for all $A \neq C_1$ and $A \neq C_2$
\end{itemize}
where $\Delta$ is a value such that $m(A) \in [0,1]$ for all $A \subseteq X.$ 

Then, the nonspecificity of $Bel$ is larger than the nonspecificity of $Bel'$, that is
$$N(Bel) \geq N(Bel').$$
\end{proposition}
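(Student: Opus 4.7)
The plan is to attack this by direct computation from the definition of nonspecificity, relying on the fact that $Bel$ and $Bel'$ differ in their basic probability assignments only on $C_1$ and $C_2$.

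First I would write out
\begin{equation*}
N(Bel) - N(Bel') = \sum_{A \subseteq X} \bigl(m(A) - m'(A)\bigr) \log |A|.
\end{equation*}
By hypothesis, $m(A) = m'(A)$ for every $A$ distinct from $C_1$ and $C_2$, so all of those terms vanish and the sum collapses to the two surviving contributions. Substituting $m'(C_1) = m(C_1) + \Delta$ and $m'(C_2) = m(C_2) - \Delta$ gives
\begin{equation*}
N(Bel) - N(Bel') = -\Delta \log |C_1| + \Delta \log |C_2| = \Delta \log \frac{|C_2|}{|C_1|}.
\end{equation*}

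At this point the conclusion is immediate from the hypothesis $C_1 \subseteq C_2$, which yields $|C_1| \leq |C_2|$ and hence $\log(|C_2|/|C_1|) \geq 0$. Together with $\Delta \geq 0$ (implicit in the description of the transfer, since mass is moved from $C_2$ to $C_1$; one should also check that this non-negativity follows from the stated constraint $m'(A) \in [0,1]$, in the case where $m(C_1) = 0$ the requirement forces $\Delta \geq 0$), the product is non-negative, giving $N(Bel) \geq N(Bel')$.

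The only potential obstacle is ensuring the sign of $\Delta$ is correctly pinned down; the proposition statement leaves some room for reading $\Delta$ as a general real number, but the accompanying description of the evidence transfer makes clear that $\Delta \geq 0$ is the intended regime. If one instead wishes to handle the case $|C_1| = |C_2|$, this forces $C_1 = C_2$ and the difference is trivially zero, so the inequality degenerates to equality.
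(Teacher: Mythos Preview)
Your proof is correct and follows essentially the same direct computation as the paper: both reduce $N(Bel)-N(Bel')$ (equivalently, the paper expands $N(Bel')$) to $\Delta(\log|C_2|-\log|C_1|)$ and conclude from $C_1\subseteq C_2$. You are in fact slightly more careful than the paper, which writes $|C_1|<|C_2|$ where only $\leq$ is guaranteed and never comments on the sign of $\Delta$; your remark that $\Delta\geq 0$ is really a contextual assumption (not forced in general by the constraint $m'(A)\in[0,1]$) is well taken.
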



\begin{proof}
To prove this proposition, let us consider the nonspecificity of $Bel$ and put it in terms of the nonspecificity of $Bel'$: 
\begin{equation}
\begin{array}{lll}
N(Bel')&= &\sum_{A \subseteq X} m'(A) \log |A|\\
       &= &\sum_{A \subseteq X, A \neq C_1} m'(A) \log |A|\\
       &   &+  m'(C_1) \log |C_1| \\
       &   &+  m'(C_2) \log |C_2|\\
       &= &\sum_{A \subseteq X, A \neq C_1} m(A) \log |A|\\
       &   &+  (m(C_1) + \Delta) \log |C_1| \\
       &   &+  (m(C_2) - \Delta) \log |C_2|\\
       &= &\sum_{A \subseteq X, A \neq C_1} m(A) \log |A|\\
       &   &+  m(C_1)  \log |C_1| \\
       &   &+  (m(C_2) \log |C_2|\\
       &   &+  \Delta \log |C_1| \\
	&  &-  \Delta \log |C_2|\\
       &= &N(Bel) +  \Delta (\log |C_1| - \log |C_2|)\\
\end{array}
\end{equation}
As $|C_1| < |C_2|$ we have that $\log |C_1| - \log |C_2| < 0$, so that $N(Bel) \geq N(Bel')$. 
\end{proof}

We have seen in this section that nonspecificity is useful to measure the information in a belief function, and that entropy is not. The {\em failure} of entropy in this context might seem to be in contradiction with the fact that entropy typically is understood as a measure of information. Nevertheless, the classification of uncertainty measures given in Klir and Wierman~\cite{ref:Klir.Wierman.1999} sheds some light over this issue. Entropy is classified as a measure of conflict. In this sense, the belief function $m'$ in Example~\ref{example.2.entropy} presents a larger conflict for a decision than when mass is transferred from $C_2$ to $C_1$. This is so because the probabilities are much more similar in $m'$ that in $m$ although we have more information in $m'$ than in $m$. In contrast, Klir and Wierman classify non-specificity as a measure of imprecision, which is said to be connected with sizes (cardinalities) of relevant sets of alternatives (see p. 43 in~\cite{ref:Klir.Wierman.1999}) which is precisely the case here. Proposition~\ref{prop:Beliefs.nonspecificity} clearly shows that any transference of mass from a set to a more concrete one will always increase the measure. Indeed, we have that $N(Bel)=0$ when $Bel$ is a probability distribution. 

\subsection{Conditioning}
When additional information is given to the re-identification algorithm, the belief function is expected to change accordingly. In general, we can pressume that this information can also be represented in terms of a belief function. Note that conditioning in probability theory can be expressed by a belief function. For example, conditioning by the presence of an element in a set $B$ which in probability theory results into the probability $P(A|B)$, will be expressed by the conditioning with respect to the belief function generated from $m_B(B)=1$ and $m_B(A)=0$ for all $A \neq B$. Naturally, the belief function used in the conditioning should also be compatible with the true probability. 

Therefore, given two belief functions compatible with the true probability, the conditioning should lead to another belief function, also compatible with the true probability. 

\begin{definition}
Given two belief functions $Bel_1$ and $Bel_2$ compatible with a true probabilty $P$, an {\em acceptable combination function} ${\mathbb{C}}$ is a combination function that returns a new belief function $Bel$ that is compatible with $P$ and such that $N(Bel) \leq \min (N(Bel_1),N(Bel_2))$. 
\end{definition}

Any combination function that satisfies this property will be suitable for conditioning. See e.g.~\cite{{ref:Walley.1991},{ref:Torra.1995}} for functions satisfying this property. 

\begin{definition}
\label{def:combinacio.r.beliefs}
Given $r$ belief functions $Bel_1, \dots, Bel_r$ compatible with a true probability $P$, we define their combination ${\mathbb{C}'}$ as the extension of the acceptable combination function ${\mathbb{C}}$ as follows: 
\begin{equation}
{\mathbb{C}'}(Bel_1, \dots, Bel_r)=\left\{
	\begin{array}{ll}
	  {\mathbb{C}}({\mathbb{C}'}(Bel_1, \dots, Bel_{r-1}), Bel_i) & \textrm{if } i>2 \\
	  {\mathbb{C}}(Bel_1,Bel_2)                                   & \textrm{if } i=2 \\
	\end{array}
   \right.
\end{equation}
\end{definition}


Then, when different items $it_1, \dots, it_k$ of additional information are considered, all of them expressed by means of belief functions $Bel_{it_1}, \dots, Bel_{it_k}$, which are compatible with the true probability, and we combine them, the result will converge to be the true probability, and the nonspecificity is reduced. When the true probability is achieved, we have that the nonspecificity is zero. 

\begin{proposition}
Let $it_1, \dots, it_k$ be a set of items of additional information expressed by means of belief functions $Bel_{it_1}, \dots, Bel_{it_k}$ compatible with a true probability $P$. Let ${\mathbb{C}}$ be an acceptable combination function. 

Then, the combination of belief functions $Bel_{it_1} \dots Bel_{it_r}$ for $r < k$ using ${\mathbb{C}'}$ as in Definition~\ref{def:combinacio.r.beliefs} is a belief function $Bel_r$ compatible with the true probability $P$, and such that $N(Bel_r) \geq N(Bel_{r+1})$. 

In addition, if, for a given $r_0$ the belief function $Bel_{r_0}$ is a probability, then $Bel_{r_0}=P$ and for all $r \geq r_0$ we have $Bel_{r_0} = Bel_{r}$. 
\end{proposition}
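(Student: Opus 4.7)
The plan is a straightforward induction on $r$, supplemented by a one-line lemma about probabilities that are compatible with $P$. For the first two assertions, the base case $r=2$ is immediate from the definition of an acceptable combination function: $\mathbb{C}(Bel_{it_1}, Bel_{it_2})$ is compatible with $P$ and its non-specificity is at most $\min(N(Bel_{it_1}), N(Bel_{it_2}))$. For the inductive step, assuming $Bel_r$ is compatible with $P$, the belief function $Bel_{r+1}=\mathbb{C}(Bel_r, Bel_{it_{r+1}})$ is again compatible with $P$ (directly from the acceptability condition), and it satisfies $N(Bel_{r+1}) \leq \min(N(Bel_r), N(Bel_{it_{r+1}})) \leq N(Bel_r)$, which is exactly the monotonicity claim.

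For the fixed-point part I would first prove the following small lemma: if $Q$ is a probability compatible with $P$, then $Q=P$. The argument is short. Compatibility gives $Q(A)\leq P(A)$ for every $A\subseteq X$; applying this to $X\setminus A$ and using $Q(A)+Q(X\setminus A)=1=P(A)+P(X\setminus A)$ yields $Q(A)\geq P(A)$, so $Q(A)=P(A)$ for all $A$.

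Now suppose $Bel_{r_0}$ is a probability. The lemma immediately gives $Bel_{r_0}=P$. Since $P$ has all mass concentrated on singletons, $N(P)=\sum_{x} m(\{x\})\log 1 = 0$. Then $Bel_{r_0+1}=\mathbb{C}(Bel_{r_0}, Bel_{it_{r_0+1}})$ is compatible with $P$ and has non-specificity at most $\min(N(Bel_{r_0}), N(Bel_{it_{r_0+1}}))=0$. Since every summand in $N = \sum_A m(A)\log|A|$ is non-negative, this forces $m(A)=0$ whenever $|A|>1$, so $Bel_{r_0+1}$ is itself a probability; applying the lemma a second time gives $Bel_{r_0+1}=P=Bel_{r_0}$. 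A trivial induction extends the equality to all $r\geq r_0$.

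The only genuinely new ingredient is the lemma on probabilities compatible with $P$, and even that reduces to the observation that a compatibility inequality applied to both $A$ and $X\setminus A$ forces equality; everything else is definition-chasing. The main bookkeeping subtlety is that Definition~\ref{def:combinacio.r.beliefs} starts at $r=2$, so the case $r=1$ should be read as $Bel_1=Bel_{it_1}$ alone, which is already compatible with $P$ by hypothesis. I do not anticipate any serious obstacle; the proposition is essentially a packaging of the acceptability conditions together with the nonnegativity of non-specificity.
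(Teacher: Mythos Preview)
Your proposal is correct and follows the same approach the paper intends: the paper's own ``proof'' is a single sentence stating that the result is trivial from the definition of an acceptable combination function together with the fact that $N(Bel)=0$ when $Bel$ is a probability. You simply make this explicit via induction, and you also supply the one ingredient the paper leaves tacit---the short lemma that a probability compatible with $P$ must equal $P$---which is indeed needed to conclude $Bel_{r_0}=P$ and to propagate the equality.
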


The proof of this proposition is trivial taking into account that ${\mathbb{C}}$ is an acceptable combination function and that $N(Bel)=0$ when $Bel$ is a probability distribution. 

\section{Conclusions}
In this paper we have formalized re-identification algorithms in terms of belief functions and probabilities. We have shown that belief functions and their pignistic transformation permits us to express the uncertainty in re-identification algorithms in a natural way. 

\section*{Acknowledgements}
Partial support by the Spanish MEC projects ARES (CONSOLIDER INGENIO 2010 CSD2007-00004), eAEGIS (TSI2007-65406-C03-02), COPRIVACY (TIN2011-27076-C03-03), and RIPUP (TIN2009-11689) is acknowledged. Partial support of the European Project DwB (Grant Agreement Number 262608) is also acknowledged. One author is partially supported by the FPU grant (BOEs 17/11/2009 and 11/10/2010) and by the Government of Catalonia under grant 2009 SGR 1135. The authors are with the UNESCO Chair in Data Privacy, but their views do not necessarily reflect those of UNESCO nor commit that organization.

\end{document}